%% file: llm_bayesian_agentic_ai_risk.tex
\documentclass[11pt]{article}
\usepackage[margin=1in]{geometry}
\usepackage{amsmath,amssymb,amsthm,mathtools,bm}
\usepackage{booktabs}
\usepackage{graphicx}
\usepackage{hyperref}
\usepackage{algorithm}
\usepackage{algpseudocode}
\usepackage[numbers]{natbib}
\usepackage{xcolor}
\usepackage{enumitem}
\usepackage{array}

\newtheorem{definition}{Definition}
\newtheorem{assumption}{Assumption}
\newtheorem{proposition}{Proposition}

\title{\textbf{Belief at Risk}: Quantifying Agentic AI Model Risk with LLM-Inferred Bayesian State Filters\footnote{Python source code available at github.com/mfrdixon/belief-at-risk}}
\author{Matthew Dixon\\Quiota LLC}
\date{\today}

\begin{document}
\maketitle

\begin{center}
\fbox{\begin{minipage}{0.93\textwidth}
\textbf{Executive summary cover sheet for CROs, technology leaders, and portfolio managers.}

Agentic AI introduces a form of model risk that is closer to stochastic control than to ordinary prediction. A conventional model produces a score or forecast. An agent forms beliefs, invokes tools, selects actions, and updates its future information set through the consequences of those actions. This paper proposes a quantitative governance framework in which the central object is not a prompt, benchmark score, or qualitative risk rating, but a time-indexed posterior belief state over latent operating regimes.

The proposed methodology uses a large language model as a probabilistic observation model. The LLM converts structured and semantic evidence into a probability vector over pre-specified latent regimes. A Bayesian filter then regularizes that evidence through a Markov transition prior, producing auditable belief states, entropy, drift, calibration diagnostics, and loss-based tail-risk measures. In this design, the LLM is not granted authority to make portfolio decisions. It supplies uncertain evidence. The Bayesian layer turns that evidence into a controlled model-risk object.

For risk managers, the main output is a residual-risk framework that links uncertainty, decision sensitivity, consequence severity, autonomy, and controls. For technology leaders, it gives a validation architecture for LLM agents using structured outputs, belief logging, escalation thresholds, and calibration monitoring. For portfolio managers, the empirical case study illustrates how LLM-inferred market regimes can be filtered, converted into exposures, and evaluated using standard performance, drawdown, VaR, CVaR, entropy, and Brier-score diagnostics.

\textbf{Practical deliverables.} The accompanying Python code downloads adjusted daily equity data from Massive.com, calls the OpenAI API for structured regime beliefs, applies Bayesian filtering, constructs an illustrative regime-sensitive portfolio policy, and generates figures plus LaTeX tables, including a complete numerical-experiment parameter table.
\end{minipage}}
\end{center}

\begin{abstract}
Agentic AI systems create model risk because uncertain beliefs are coupled to autonomous actions. This paper develops a mathematical framework for quantifying agentic AI risk by representing the system as a partially observed Markov decision process with latent states, Bayesian belief updates, control-dependent losses, and tail-risk functionals. The main methodological contribution is to treat a large language model as an uncertain semantic observation model: the LLM maps high-dimensional evidence into a probability vector over latent regimes, while a Bayesian filter imposes temporal coherence and produces auditable posterior beliefs. The resulting framework separates uncertainty quantification from risk measurement. Uncertainty is represented by posterior entropy, belief drift, and calibration error; risk is represented by the distribution of losses induced by decisions taken under those beliefs. The paper connects this construction to model risk management, coherent risk measures, Bayesian filtering, POMDP theory, robust control, and quantitative portfolio risk. An empirical case study using adjusted daily equity returns from Massive.com illustrates how LLM-inferred belief states can be combined with Bayesian filtering to produce regime probabilities, uncertainty diagnostics, calibration statistics, and VaR/CVaR-style risk measures. The framework is intended as a rigorous foundation for validating agentic AI in financial and other regulated decision environments.
\end{abstract}

\section{Introduction: background and literature review}

Model risk management has traditionally focused on the possibility that a model is misspecified, incorrectly implemented, poorly calibrated, or used outside the domain in which its assumptions are credible. The canonical regulatory formulation emphasizes adverse consequences from model errors and inappropriate model use \citep{sr117}. In quantitative finance, the same theme appears in valuation adjustment, stress testing, portfolio construction, coherent risk measures, and robust decision-making. A model is never merely a map from inputs to outputs; it is an instrument embedded in a decision process.

Agentic AI sharpens this point. A foundation model deployed as a chat interface may provide advice, explanations, or summaries. An agentic system can observe changing information, call tools, write files, trigger workflows, send messages, query private databases, or recommend trades. Consequently, agentic AI is naturally formulated as a closed-loop stochastic control problem. The mathematical language of partially observed Markov decision processes, Bayesian filtering, sequential decision theory, and risk-sensitive control is therefore more appropriate than a static benchmark-centered view of model quality \citep{puterman,kaelbling1998planning,bertsekas2017dynamic}.

The finance literature supplies several essential ingredients. Coherent and convex risk measures formalize the distinction between expected loss and tail-sensitive loss \citep{artzner1999coherent,foellmer2016stochastic}. Value-at-Risk and Conditional Value-at-Risk remain central in practice, even when their limitations are well understood \citep{jorion2007var,rockafellar2000optimization,mcneil2015quantitative}. Bayesian portfolio construction and Black--Litterman allocation show how subjective views can be combined with priors in a disciplined way \citep{black1992global,he2002intuition,satchell2000demystification}. Robust optimization and distributionally robust control address the fact that estimated probability laws are themselves uncertain \citep{hansen2008robust,ben2009robust,glasserman2014robust}. These ideas are directly relevant to LLM agents because the agent's belief state may be probabilistic, misspecified, prompt-sensitive, and regime-dependent.

Modern AI governance frameworks emphasize mapping, measuring, managing, and monitoring AI risks, including risks specific to generative AI such as confabulation, privacy leakage, information integrity, security, and value-chain integration \citep{nist2023airmf,nist2024genai}. Those frameworks are necessary, but many deployments still lack a mathematical bridge from qualitative governance categories to quantitative measures of residual risk. 

As shown in Dixon~\cite{dixon2026pomdp}, agentic AI systems may be formulated as partially observable Markov decision processes in which decisions are conditioned on posterior beliefs over latent environmental states. This paper builds on ~\cite{dixon2026pomdp} and provides such a bridge by treating the LLM as an observation model inside a Bayesian POMDP and by measuring how posterior uncertainty propagates into decisions and losses.

\subsection{Contributions}

The paper makes four contributions.

\begin{enumerate}[leftmargin=*]
\item \textbf{Theoretical contribution.} We define agentic AI model risk as a tail-sensitive loss functional over trajectories generated by actions taken under uncertain posterior beliefs. The framework distinguishes belief uncertainty from consequence-bearing risk and expresses residual risk as a function of posterior uncertainty, decision sensitivity, loss severity, autonomy amplification, and control effectiveness.
\item \textbf{Methodological contribution.} Following ~\cite{dixon2026pomdp}, we introduce an LLM-inferred Bayesian observation model. The LLM returns a structured probability vector over latent regimes; a tempered Bayesian filter combines that vector with a Markov transition prior. A novel variational characterization shows that the update is the unique belief closest to the transition prior while incorporating LLM evidence.
\item \textbf{Validation contribution.} We provide diagnostics for entropy, KL belief drift, Brier-score calibration, realized regime consistency, VaR, CVaR, drawdown, and value-of-information escalation. These diagnostics translate model risk management concepts into audit trails for agentic AI.
\item \textbf{Empirical contribution.} Finally, we provide reproducible Python code using Massive.com adjusted daily equity aggregate bars and the OpenAI API. The code generates regime-belief figures, uncertainty plots, risk-score plots, performance comparisons, and LaTeX tables, including all parameters used in the numerical experiment.
\end{enumerate}

\subsection{Overview of the rest of the paper}

Section~2 develops the POMDP and Bayesian filtering methodology. Section~3 specifies the role of LLMs as semantic observation models rather than autonomous decision makers. Section~4 explains how uncertainty becomes model risk, including calibration, prior assumptions, and the extent to which the construction is Bayesian. Section~5 presents the experimental setup and empirical results. Section~6 concludes with implications for governance, validation, and future research.

\section{Methodology}

Let $(\Omega,\mathcal F,\mathbb P)$ be a probability space and let time be discrete, $t=0,1,\ldots,T$. The environment has a hidden state $S_t\in\mathcal S=\{1,\ldots,K\}$. The agent observes $O_t\in\mathcal O$, forms a posterior belief $b_t\in\Delta_K$, chooses an action $A_t\in\mathcal A$, and incurs loss $\ell(S_t,A_t)$.

\begin{definition}[Agentic POMDP]
An agentic AI system is represented by
\[
\mathcal M=(\mathcal S,\mathcal A,\mathcal O,P,G,\pi,\ell,\gamma),
\]
where $P(s'\mid s,a)$ is the transition kernel, $G(o\mid s)$ is the observation kernel, $\pi(a\mid h_t)$ is the policy, $h_t=(o_0,a_0,\ldots,o_t)$ is the observable history, $\ell$ is a loss function, and $\gamma\in(0,1]$ is a discount factor.
\end{definition}

The belief state is the conditional distribution
\[
b_t(s)=\mathbb P(S_t=s\mid H_t),\qquad H_t=\sigma(O_0,A_0,\ldots,O_t).
\]
When $G$ is known, the standard Bayesian filter is
\[
\tilde b_t(s)=\sum_{s'}P(s\mid s',A_{t-1})b_{t-1}(s'),
\]
\[
b_t(s)=\frac{G(O_t\mid s)\tilde b_t(s)}{\sum_jG(O_t\mid j)\tilde b_t(j)}.
\]
The challenge in enterprise agentic AI is that $G$ is rarely known. Observations may include numerical features, documents, emails, tool outputs, market commentary, compliance logs, or natural-language instructions. A manually specified likelihood for such evidence is often impossible.

We therefore replace the unavailable likelihood with an LLM-inferred evidence vector. Let $q_t\in\Delta_K$ denote the LLM output, where $q_t(s)$ is the model's probability assessment for latent state $s$ after observing $O_t$. Given a transition matrix $P$ and an evidence-tempering parameter $\eta\in[0,1]$, define
\begin{equation}
\label{eq:llm_filter}
b_t(s)=\frac{q_t(s)^\eta\tilde b_t(s)}{\sum_jq_t(j)^\eta\tilde b_t(j)}.
\end{equation}

\begin{assumption}[Finite latent regimes]
The latent state space is finite and chosen before the experiment. The regimes are not learned ex post from realized returns; they are modelling hypotheses that encode the risk manager's prior partition of economically meaningful states.
\end{assumption}

\begin{assumption}[Markov transition prior]
The transition matrix $P$ is time-homogeneous over the experiment and diagonally dominant. This expresses persistence of regimes while allowing transitions to adjacent and stress states. It is a prior regularizer, not a claim that markets are literally stationary.
\end{assumption}

\begin{assumption}[Tempered semantic evidence]
The LLM output satisfies $q_t(s)>0$ and $\sum_s q_t(s)=1$. The parameter $\eta$ controls the influence of the LLM evidence. Values below one reduce overconfidence and acknowledge epistemic uncertainty in the LLM observation model.
\end{assumption}

\begin{proposition}[Variational form of the LLM-Bayesian update]
The belief update in Equation~\eqref{eq:llm_filter} is the unique minimizer over the simplex of
\[
\min_{b\in\Delta_K}\left\{D_{KL}(b\Vert\tilde b_t)-\eta\sum_s b(s)\log q_t(s)\right\}.
\]
\end{proposition}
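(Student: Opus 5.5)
The plan is to rewrite the objective as a single Kullback--Leibler divergence to the candidate posterior, so that both existence and uniqueness follow from Gibbs' inequality. Write $b^\star$ for the right-hand side of Equation~\eqref{eq:llm_filter}, and set
\[
Z_t=\sum_j q_t(j)^\eta\tilde b_t(j),\qquad b^\star(s)=\frac{q_t(s)^\eta\tilde b_t(s)}{Z_t}.
\]
First check that $Z_t>0$, which makes $b^\star$ a well-defined element of $\Delta_K$. This holds because $q_t(s)>0$ by the tempered-evidence assumption and $\tilde b_t$ is a probability vector, so at least one term in the sum is strictly positive.

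Next expand the objective $F(b)=D_{KL}(b\Vert\tilde b_t)-\eta\sum_s b(s)\log q_t(s)$. We have
\[
F(b)=\sum_s b(s)\log\frac{b(s)}{\tilde b_t(s)\,q_t(s)^{\eta}}=\sum_s b(s)\log\frac{b(s)}{b^\star(s)}-\log Z_t\sum_s b(s).
\]
On the simplex the second sum equals one, so
\[
F(b)=D_{KL}(b\Vert b^\star)-\log Z_t .
\]
The term $\log Z_t$ does not depend on $b$. Gibbs' inequality then gives $D_{KL}(b\Vert b^\star)\ge 0$, with equality if and only if $b=b^\star$. Hence $b^\star$ is the unique minimizer, and the minimum value is $-\log Z_t$.

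The main obstacle is the treatment of zeros and the support of $\tilde b_t$. We use the conventions $0\log 0=0$ and $D_{KL}(b\Vert\tilde b_t)=+\infty$ whenever $b$ puts mass where $\tilde b_t(s)=0$.
\begin{itemize}
\item Any such $b$ has $F(b)=+\infty$. The minimizer $b^\star$ gives such states zero mass, because $b^\star(s)\propto\tilde b_t(s)$.
\item Since $q_t>0$, the support of $b^\star$ equals the support of $\tilde b_t$, and the identity $F(b)=D_{KL}(b\Vert b^\star)-\log Z_t$ holds for all $b\in\Delta_K$ in the extended-real sense.
\item The term $-\eta\sum_s b(s)\log q_t(s)$ is always finite, again because $q_t>0$.
\end{itemize}

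If one prefers a first-order argument, the objective is strictly convex in $b$ on its effective domain. Stationarity of the Lagrangian with the constraint $\sum_s b(s)=1$ gives $\log b(s)=\log\tilde b_t(s)+\eta\log q_t(s)+\text{const}$, which recovers $b^\star$. This route is only a cross-check; the Gibbs-inequality argument above already settles uniqueness, so no second-order analysis is needed.
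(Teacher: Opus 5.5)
Your proof is correct, but it takes a different route from the paper. The paper forms a Lagrangian with only the equality constraint $\sum_s b(s)=1$. It solves the first-order condition to obtain $b(s)\propto\tilde b_t(s)q_t(s)^\eta$, normalizes, and invokes strict convexity on the interior of the simplex for uniqueness. That is essentially the computation you keep only as a cross-check.

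You instead fold the objective into a single divergence, $F(b)=D_{KL}(b\Vert b^\star)-\log Z_t$, and read off existence, uniqueness and the optimal value at once from Gibbs' inequality. This makes your argument global on all of $\Delta_K$. It needs no separate argument that the nonnegativity constraints are inactive, that a stationary point of a convex objective is a global minimizer, or that no competing minimizer lies on the boundary. The paper's one-line appeal to interior strict convexity leaves those points implicit. Your argument also covers the case where $\tilde b_t$ has zero entries, in which the paper's $\log\tilde b_t(s)$ is undefined. Finally, it yields the minimum value $-\log Z_t$, which exhibits the update as a Gibbs (Donsker--Varadhan) variational principle. Here $Z_t$ is the normalizing constant, the analogue of the predictive likelihood $\sum_j G(O_t\mid j)\tilde b_t(j)$ in the exact filter.

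In exchange, the paper's Lagrangian route is constructive. It derives the form of the minimizer rather than requiring you to guess $b^\star$ beforehand. It also carries over to penalties for which no closed-form reduction to a single KL divergence exists.
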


\begin{proof}
The Lagrangian is
\[
\mathcal J(b,\lambda)=\sum_s b(s)\log\frac{b(s)}{\tilde b_t(s)}-\eta\sum_s b(s)\log q_t(s)+\lambda\left(\sum_sb(s)-1\right).
\]
The first-order condition implies
\[
\log b(s)-\log\tilde b_t(s)-\eta\log q_t(s)+1+\lambda=0.
\]
Thus $b(s)\propto\tilde b_t(s)q_t(s)^\eta$. Normalization yields Equation~\eqref{eq:llm_filter}. Strict convexity on the interior of the simplex gives uniqueness.
\end{proof}

\begin{algorithm}[h]
\caption{LLM-Bayesian belief filtering for agentic AI model risk}
\begin{algorithmic}[1]
\State Choose latent regimes $\mathcal S$, transition prior $P$, initial belief $b_0$, and evidence trust $\eta$.
\For{$t=1,\ldots,T$}
\State Observe evidence $O_t$.
\State Query the LLM for a structured probability vector $q_t\in\Delta_K$.
\State Compute the Markov-predicted prior $\tilde b_t=P^\top b_{t-1}$.
\State Update $b_t(s)\propto\tilde b_t(s)q_t(s)^\eta$.
\State Record entropy, belief drift, calibration diagnostics, action, realized loss, and controls.
\EndFor
\end{algorithmic}
\end{algorithm}

\section{LLMs as semantic observation models}

The LLM is used to infer latent belief states, not to choose trades or to determine final actions. This distinction is essential for governance. In the proposed architecture, the LLM is a probabilistic semantic sensor. It receives a controlled prompt containing observations and regime definitions, and it must return a schema-constrained probability vector. The downstream Bayesian filter, policy, and risk metrics operate on that vector.

Formally, let $\mathcal L_\theta$ denote the LLM and let $\Psi$ denote the prompt and schema. The observation model is
\[
q_t=\mathcal L_\theta(O_t;\Psi),\qquad q_t\in\Delta_K.
\]
In implementation, structured outputs are used so that the response contains exactly the required regime probabilities and a rationale field. This makes the LLM output machine-readable and audit-friendly. The rationale is not used in the Bayesian update; it is retained for validation and review.

The role of the LLM resembles an analyst supplying a probabilistic view in a Black--Litterman portfolio model. The view is not accepted mechanically. It is blended with a prior, tempered for uncertainty, and evaluated ex post. If $\eta=0$, the Bayesian system ignores the LLM and evolves only under the transition prior. If $\eta=1$, the LLM evidence is used without tempering. Intermediate values produce conservative evidence pooling.

Several risks remain. The LLM may be poorly calibrated, sensitive to prompt wording, or systematically biased by the feature representation. It may also understate ambiguity by returning overly concentrated probabilities. These risks motivate calibration diagnostics, entropy monitoring, prompt stability tests, and control thresholds. The point of the framework is not to assume that the LLM is Bayesian. It is to place the LLM inside a Bayesian validation wrapper that makes its uncertainty measurable and contestable.

\section{From uncertainty to model risk}

Uncertainty quantification and risk measurement are distinct. Uncertainty is a property of a belief distribution. Risk is a property of the loss distribution induced by acting under that belief. A highly uncertain agent can be low risk when consequences are small; a confident but misspecified agent can be high risk when consequences are severe.

\subsection{Posterior uncertainty and belief drift}

Posterior uncertainty is measured by normalized entropy
\[
\bar H(b_t)=-\frac{1}{\log K}\sum_{s=1}^Kb_t(s)\log b_t(s).
\]
Belief instability is measured by KL drift
\[
D_t=D_{KL}(b_t\Vert b_{t-1})=\sum_s b_t(s)\log\frac{b_t(s)}{b_{t-1}(s)}.
\]
High entropy indicates diffuse uncertainty across regimes. High drift indicates unstable beliefs, abrupt evidence changes, or regime transition. Both can be governance signals, but neither is a loss by itself.

\subsection{Calibration and model risk}

When ex-post labels $Y_t\in\{e_1,\ldots,e_K\}$ are available, calibration can be measured by the multiclass Brier score
\[
BS=\frac1T\sum_{t=1}^T\sum_{s=1}^K\left(b_t(s)-Y_t(s)\right)^2.
\]
Reliability diagrams, log scores, probability integral transforms for continuous targets, and regime-conditioned confusion matrices can also be used. In the empirical case study, realized regimes are heuristic labels derived from forward and trailing market behavior; they are not treated as ground truth about the true economy. They provide a validation proxy for whether beliefs align with subsequent market states.

Calibration matters because model risk arises when reported probabilities fail to match realized frequencies. In an agentic system, miscalibration is amplified by actions. Let $A_t=\pi(b_t,X_t)$, where $X_t$ denotes observed covariates. The trajectory loss is
\[
L_T=\sum_{t=0}^T\gamma^t\ell(S_t,A_t).
\]
Expected loss is useful but incomplete. Tail-sensitive loss is captured by
\[
\operatorname{VaR}_\alpha(L_T)=\inf\{x:\mathbb P(L_T\le x)\ge\alpha\},
\]
\[
\operatorname{CVaR}_\alpha(L_T)=\mathbb E[L_T\mid L_T\ge\operatorname{VaR}_\alpha(L_T)].
\]

\subsection{Is the framework fully Bayesian?}

The filtering layer is Bayesian conditional on the specified state space, transition matrix, evidence transformation, and evidence-tempering parameter. The entire framework is not fully Bayesian in the sense of placing priors on every unknown quantity, sampling the posterior over transition matrices, and integrating over LLM reliability parameters. In the base implementation, the transition matrix is a prior chosen ex ante, the regime set is fixed, and $\eta$ is a hyperparameter. A fully Bayesian extension would place priors such as
\[
P_{i\cdot}\sim\operatorname{Dirichlet}(\alpha_i),\qquad \eta\sim\operatorname{Beta}(a,b),
\]
and would integrate over posterior uncertainty in $P$ and $\eta$. The present version is therefore an empirical-Bayes or Bayesian-filtering framework with fixed hyperparameters. This is appropriate for a foundational validation study because it exposes the assumptions clearly and produces auditable quantities. It also avoids giving false precision to the LLM reliability model before sufficient validation data exist.

\subsection{Belief-at-Risk}
\label{subsec:belief_at_risk}

Uncertainty alone does not fully characterize model risk. A model can be uncertain about an inconsequential decision, or confident about a decision with severe downside consequences. The proposed local risk measure therefore combines three terms: posterior uncertainty, belief-state drift, and downside consequence.

Belief-state drift is measured by the Kullback--Leibler divergence
\begin{equation}
D_t
=
D_{KL}(b_t\Vert b_{t-1})
=
\sum_i b_{i,t}
\log
\left(
\frac{b_{i,t}}{b_{i,t-1}}
\right).
\end{equation}
The proposed local AI risk measure, termed \emph{Belief-at-Risk}, is
\begin{equation}
\boxed{
BaR_t
=
\widetilde H_t
\left(
1+D_{KL}(b_t\Vert b_{t-1})
\right)
CVaR_{0.95,t}
}
\label{eq:belief_at_risk}
\end{equation}
where \(CVaR_{0.95,t}\) is the rolling downside consequence measure.

\subsection{AI Governance: Control-Adjusted Agentic Risk}

The purpose of this section is to preliminary attempt to bridge the Belief-at-Risk measure developed in the previous section to AI governance. The Belief-at-Risk measure quantifies the intrinsic model risk generated by the agent. In practice, however, agentic AI systems are rarely deployed without safeguards. Human approval workflows, tool restrictions, monitoring systems, audit trails, and policy enforcement mechanisms are specifically designed to reduce the likelihood and severity of adverse outcomes.

Consequently, the quantity of greatest interest to risk managers is not the intrinsic risk of the agent, but the residual risk that remains after controls have been applied.

To formalize this idea, let $R_{\mathrm{agent}}$ denote the intrinsic risk generated by the agent. In the present framework this may be identified with the proposed Belief-at-Risk measure,

\[
R_{\mathrm{agent}}
=
BaR_t.
\]

Suppose a collection of governance controls is implemented. Let $\kappa \in [0,1]$ denote the overall effectiveness of the control framework, where

\begin{itemize}
\item \(\kappa=0\) corresponds to no meaningful controls,
\item \(\kappa=1\) corresponds to perfect mitigation,
\item \(0<\kappa<1\) corresponds to partial mitigation.
\end{itemize}

Residual risk is then defined by

\begin{equation}
R_{\mathrm{residual}}
=
(1-\kappa)
R_{\mathrm{agent}}.
\label{eq:residual_risk}
\end{equation}

The interpretation is straightforward. A control framework that is estimated to eliminate 80\% of risk corresponds to \(\kappa=0.8\), leaving only 20\% of the original risk.

In practice, \(\kappa\) can be estimated using historical incident data. If a control reduces the frequency of adverse outcomes from \(N_{\mathrm{before}}\) incidents to \(N_{\mathrm{after}}\) incidents, a simple estimator is

\begin{equation}
\kappa
=
1
-
\frac{N_{\mathrm{after}}}
{N_{\mathrm{before}}}.
\end{equation}

For example, if a human approval process reduces unauthorized actions from 100 incidents per year to 20 incidents per year, the estimated control effectiveness is

\[
\kappa
=
1-\frac{20}{100}
=
0.8.
\]

The residual risk is therefore
\[
R_{\mathrm{residual}}
=
0.2
R_{\mathrm{agent}}.
\]
In larger organizations, multiple categories of risk may be considered separately. Let \(R_i\) denote the intrinsic risk associated with category \(i\), such as model risk, privacy risk, compliance risk, or cybersecurity risk. Let \(\lambda_i\) denote the relative importance of that category, where

\[
\sum_i \lambda_i = 1.
\]

A practical calibration is obtained by allocating weights according to historical losses:
\begin{equation}
\lambda_i
=
\frac{L_i}
{\sum_j L_j},
\end{equation}

where \(L_i\) denotes cumulative historical losses attributable to category \(i\). The resulting residual risk measure becomes

\begin{equation}
R_{\mathrm{residual}}
=
\sum_i
(1-\kappa_i)
\lambda_i
R_i.
\end{equation}

This formulation provides a simple and practical bridge between quantitative model risk measurement and AI governance. The intrinsic risk generated by the agent is first quantified using Belief-at-Risk. Governance controls are then incorporated through empirically estimated control effectiveness parameters, producing a residual risk measure that can be monitored and managed using familiar enterprise risk-management practices.

\section{Experimental Setup and Results}
\label{sec:experiments}

The purpose of this empirical study is to investigate whether uncertainty estimates derived from an LLM-enhanced Bayesian filtering framework can be transformed into meaningful quantitative measures of model risk. The analysis focuses on four related objects:

\begin{enumerate}
\item latent regime beliefs,
\item posterior uncertainty,
\item belief-state instability,
\item the resulting Belief-at-Risk (BaR) measure.
\end{enumerate}

The experimental design, numerical assumptions, and model parameters are summarized in Table~\ref{tab:experiment_parameters}.

\input{tables/experiment_parameters.tex}

Several aspects of Table~\ref{tab:experiment_parameters} are noteworthy. First, the latent state space consists of four economically interpretable regimes: Risk-On, Neutral, Risk-Off, and Crisis. These latent regimes correspond to different levels of risk appetite in financial markets. Risk-On states are associated with positive expected returns and lower perceived risk, Neutral states with balanced market conditions, Risk-Off states with elevated uncertainty and defensive positioning, and Crisis states with extreme stress characterized by rapid repricing, increased volatility, and flight-to-quality behavior. The regimes are deliberately defined at a high level to provide an interpretable latent-state representation that can be inferred from heterogeneous market and macroeconomic observations.

Second, a uniform prior is assumed over latent states in order to avoid introducing directional information into the initialization procedure. Third, downside consequence is measured using a rolling 95\% Conditional Value-at-Risk estimate computed over a 60-day window. Finally, the belief-state portfolio construction maps latent states into economically meaningful portfolio exposures ranging from fully long to fully defensive positions.

\subsection{Belief-State Portfolio Construction}

The portfolio is constructed directly from the inferred belief state. Let

\begin{equation}
b_t
=
\begin{bmatrix}
P_t(\text{RiskOn})\\
P_t(\text{Neutral})\\
P_t(\text{RiskOff})\\
P_t(\text{Crisis})
\end{bmatrix}
\end{equation}
denote the Bayesian-filtered posterior distribution over latent market states. The state-dependent exposure function is defined by

\begin{align}
w(\text{RiskOn}) &= 1,\\
w(\text{Neutral}) &= 0.5,\\
w(\text{RiskOff}) &= -0.5,\\
w(\text{Crisis}) &= -1.
\end{align}
These weights arise naturally from an ordinal utility function over latent regimes. The specific values are not important, however what matters is:
$w(\text{RiskOn})>w(\text{Neutral})>w(\text{RiskOff})>w(\text{Crisis})$. The above choice of weights is simply a convenient normalization of this ordering.
The resulting portfolio exposure is

\begin{equation}
w_t
=
P_t(\text{RiskOn})
+
0.5P_t(\text{Neutral})
-
0.5P_t(\text{RiskOff})
-
P_t(\text{Crisis})
\label{eq:belief_weight}
\end{equation}
which corresponds to the posterior expectation of the state-dependent exposure function. Portfolio returns are generated according to

\begin{equation}
r_t^{agent}
=
w_{t-1}
r_t^{SPY}.
\label{eq:agent_return}
\end{equation}
Equations~(\ref{eq:belief_weight}) and (\ref{eq:agent_return}) create a direct mapping from latent-state uncertainty to observable economic outcomes. This link is essential because the objective of the paper is to measure model risk arising from uncertainty in the inferred belief state rather than forecasting performance alone.

\subsection{Posterior Regime Dynamics}
Figure~\ref{fig:regime_probabilities} illustrates the Bayesian-filtered posterior probabilities associated with the four latent regimes.

\begin{figure}[htbp]
\centering
\includegraphics[width=\textwidth]{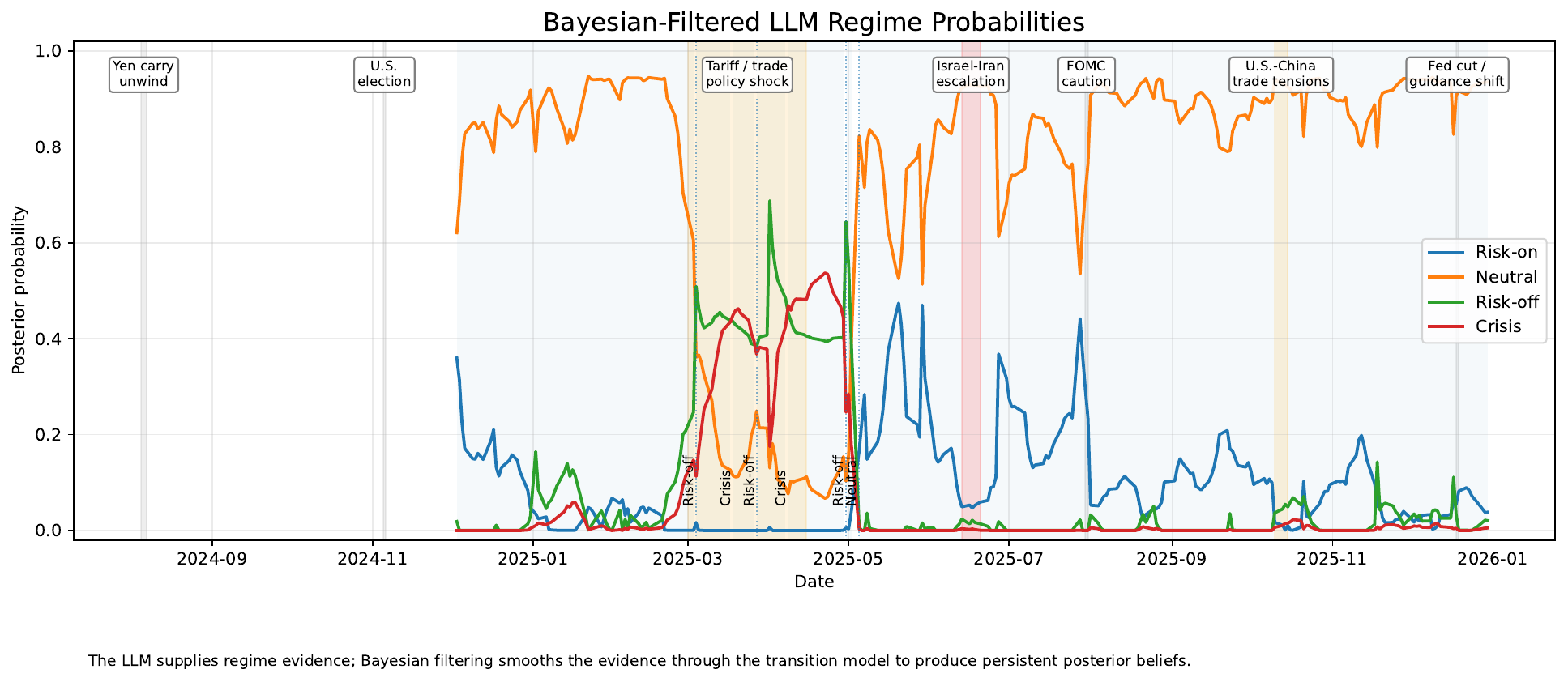}
\caption{
Bayesian-filtered posterior probabilities over the latent regime set. Event windows indicate major macroeconomic and geopolitical developments. The Bayesian filter transforms noisy LLM-generated regime assessments into persistent posterior beliefs.
}
\label{fig:regime_probabilities}
\end{figure}

Several observations emerge from Figure~\ref{fig:regime_probabilities}. First, the Neutral regime dominates much of the sample, indicating that the filtering framework does not over-classify market stress. Second, substantial reallocations of probability mass occur during the March--April 2025 tariff-policy uncertainty episode. During this period both Risk-Off and Crisis probabilities rise materially while the Neutral regime declines. Third, posterior Crisis probabilities again increase during the June 2025 Israel--Iran escalation before gradually reverting as uncertainty dissipates. The resulting regime dynamics are economically plausible and broadly consistent with the chronology of observed macro-financial events.

The statistical properties of the posterior beliefs are summarized in Table~\ref{tab:regime_summary}.

\input{tables/regime_summary.tex}

Table~\ref{tab:regime_summary} confirms the visual evidence from Figure~\ref{fig:regime_probabilities}. Neutral states occur most frequently, while Risk-Off and Crisis states exhibit lower average probabilities but substantially larger excursions during stress periods. This behaviour is consistent with the intended interpretation of these regimes as low-frequency, high-impact states.

\subsection{Posterior Entropy and Uncertainty Quantification}

Figure~\ref{fig:posterior_entropy} reports the normalized posterior entropy process.

\begin{figure}[htbp]
\centering
\includegraphics[width=\textwidth]{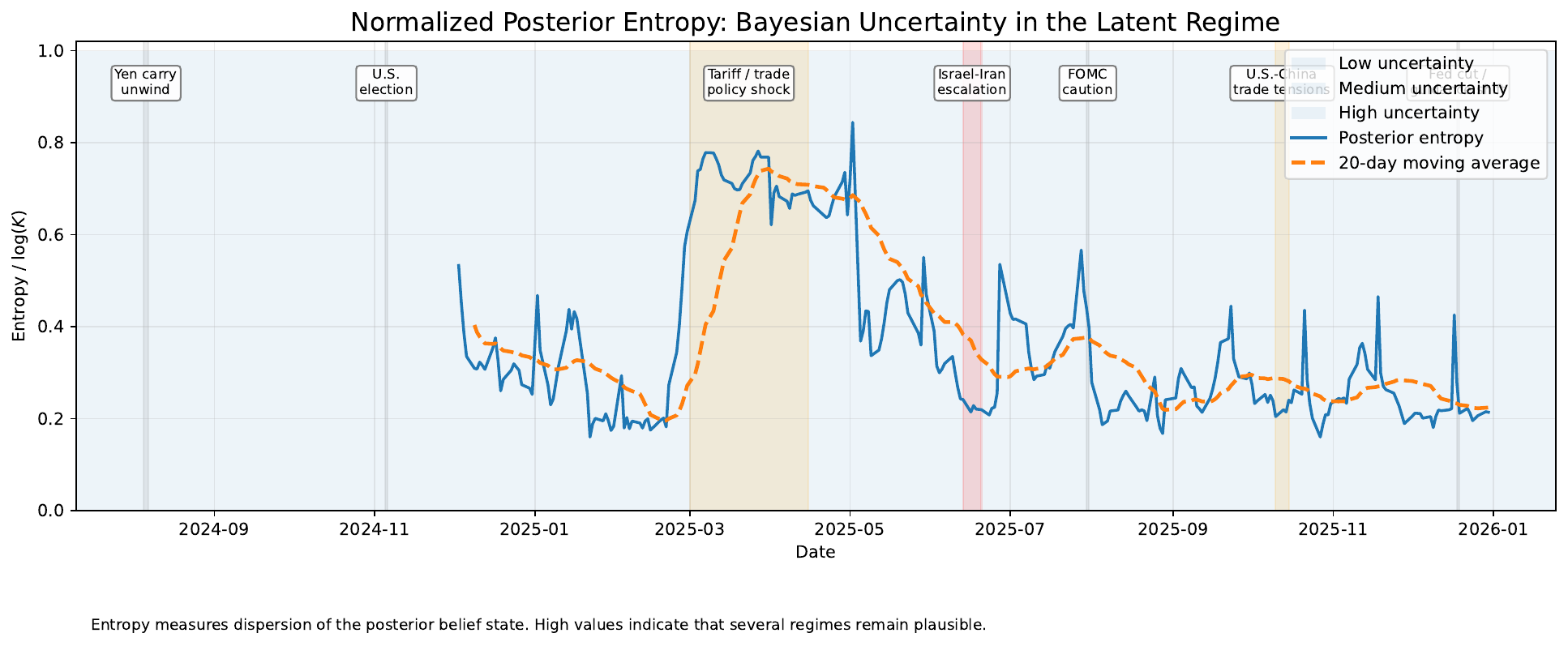}
\caption{
Normalized posterior entropy of the Bayesian-filtered belief state. High values indicate that multiple latent regimes remain plausible, while low values indicate concentrated posterior beliefs.
}
\label{fig:posterior_entropy}
\end{figure}

Entropy provides a direct measure of uncertainty regarding the latent market state. The figure reveals that uncertainty rises substantially during the tariff-policy shock period. Entropy remains elevated throughout March and April 2025 and subsequently declines as the posterior beliefs become increasingly concentrated.

The behavior of Figure~\ref{fig:posterior_entropy} is particularly important because uncertainty is distinct from market direction. A portfolio may have a small net exposure while remaining highly uncertain regarding the underlying latent regime. Consequently, entropy captures information unavailable from portfolio positions alone.

\subsection{Belief-at-Risk Dynamics}

Figure~\ref{fig:belief_at_risk} presents the proposed Belief-at-Risk measure.

\begin{figure}[htbp]
\centering
\includegraphics[width=\textwidth]{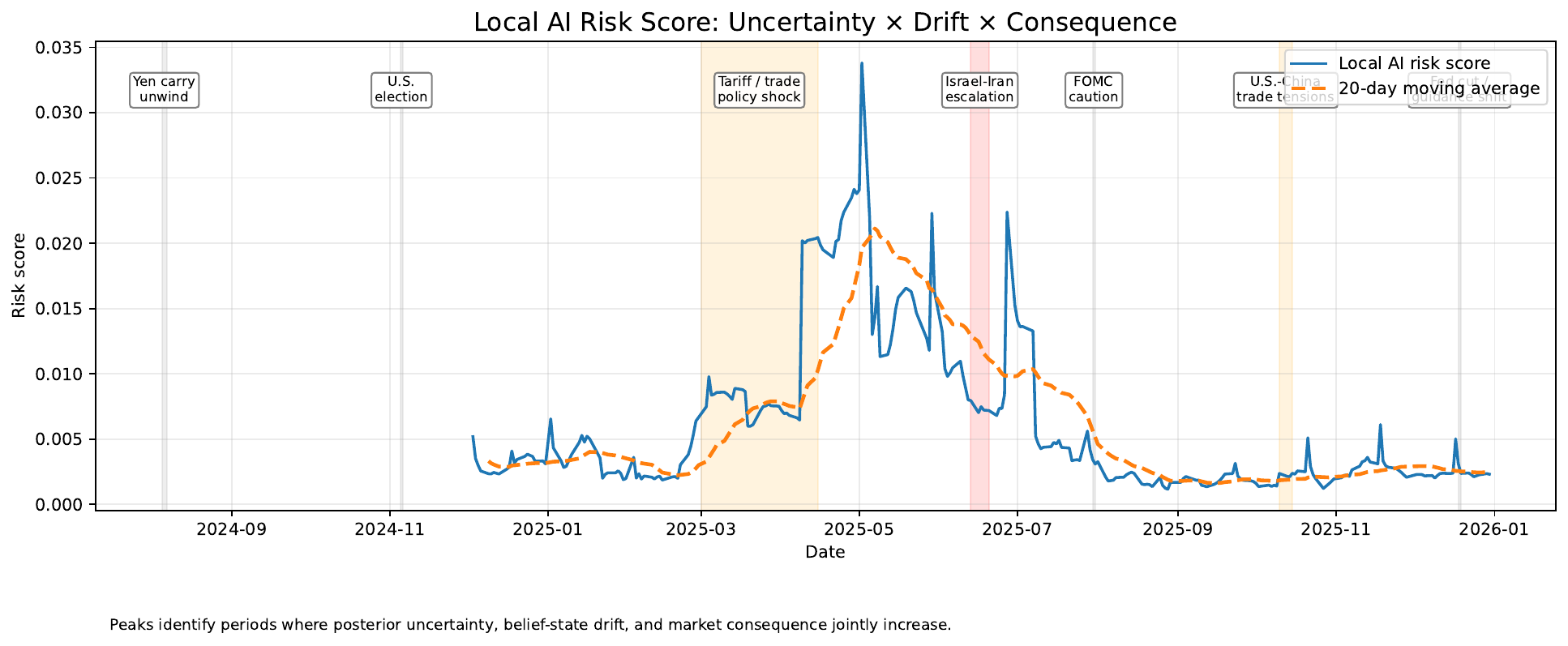}
\caption{
Belief-at-Risk (BaR) dynamics. The measure combines posterior uncertainty, belief-state instability, and downside economic consequence. Peaks occur when all three components increase simultaneously.
}
\label{fig:belief_at_risk}
\end{figure}

The largest Belief-at-Risk values occur during the March--June 2025 interval. This period coincides with elevated entropy, substantial regime transitions, and increased downside economic consequence. Importantly, the measure peaks after uncertainty emerges and before uncertainty is fully resolved. This behavior is consistent with the interpretation of Belief-at-Risk as a model-risk measure rather than a market-risk measure.

Following the June 2025 geopolitical shock, the risk measure declines steadily despite continued market fluctuations. This result highlights an important distinction between volatility and model risk. Market volatility may remain elevated while model risk falls if uncertainty and belief instability decline.

\subsection{Performance Diagnostics}

The economic consequences of the belief-state portfolio are summarized in Table~\ref{tab:performance_metrics}.

\input{tables/performance_metrics.tex}

The purpose of Table~\ref{tab:performance_metrics} is not to evaluate investment performance per se. Rather, the portfolio provides a controlled mechanism through which uncertainty can be translated into measurable economic outcomes. The resulting statistics therefore quantify the financial consequences of acting upon the inferred belief state.

\subsection{Risk Diagnostics}

Table~\ref{tab:risk_diagnostics} summarizes the principal components of the proposed Belief-at-Risk framework.

\input{tables/risk_diagnostics.tex}

Table~\ref{tab:risk_diagnostics} reveals that uncertainty, instability, and economic consequence contribute complementary information. Entropy measures the uncertainty of the posterior belief state, KL divergence quantifies instability in the evolution of beliefs, and Conditional Value-at-Risk captures downside economic consequence. Elevated model risk emerges primarily when all three components increase simultaneously.

\subsection{Summary of Findings}

Taken together, Figures~\ref{fig:regime_probabilities}--\ref{fig:belief_at_risk} and Tables~\ref{tab:experiment_parameters}--\ref{tab:risk_diagnostics} support three principal conclusions.
First, Bayesian filtering transforms noisy LLM-generated regime assessments into persistent and economically interpretable latent-state beliefs. Second, uncertainty quantification provides information that is distinct from both portfolio exposure and realized returns. Posterior entropy identifies periods in which multiple latent explanations remain plausible even when portfolio positions appear stable.

Finally, the proposed Belief-at-Risk measure successfully integrates uncertainty, belief instability, and economic consequence into a unified quantitative measure of model risk. The resulting framework provides a natural bridge between Bayesian decision theory, POMDP-based agent architectures, and quantitative model risk management.

\section{Conclusion}

Agentic AI model risk is consequence-bearing uncertainty in a closed-loop decision system. This paper provides a rigorous framework for quantifying that risk by combining LLM-inferred belief states with Bayesian filtering, calibration diagnostics, and loss-based risk measures. The framework separates the semantic inference role of the LLM from the decision and governance layers. It therefore gives CROs, technology leaders, portfolio managers, and model validators a language for measuring uncertainty, testing calibration, defining escalation thresholds, and controlling residual agentic AI risk.

Future work should estimate transition matrices hierarchically, place priors on LLM reliability, compare multiple models as competing observation kernels, incorporate richer macroeconomic and textual observations, and study distributionally robust policies under adversarial prompt and market regimes. The broader message is that trusted agentic AI should not be governed only by policies and checklists. It should be measured with the same mathematical seriousness used for financial model risk.

\bibliographystyle{plainnat}

\end{document}

%% file: tables/experiment_parameters.tex
\begin{table}
\caption{Numerical specifications and modelling assumptions used in the LLM--Bayesian filtering experiment.}
\label{tab:experiment_parameters}
\begin{tabular}{lll}
\toprule
Experimental quantity & Mathematical notation & Numerical specification \\
\midrule
Sample start date & Start date & 2021-01-01\\
Sample end date & End date & 2025-12-31 \\
Equity universe & $\mathcal{A}$ & AAPL, MSFT, GOOGL, AMZN, JPM, SPY \\
Massive.com sleep interval & $\Delta_{\mathrm{API}}$ & 10 seconds \\
Adjusted equity prices & Adjusted close & True \\
Return frequency & $\Delta t$ & Daily \\
Number of latent regimes & $K$ & 4 \\
Latent regime set & $\mathcal{S}$ & Risk-on, Neutral, Risk-off, Crisis \\
Initial regime prior & $b_0$ & (0.25, 0.25, 0.25, 0.25) \\
Bayesian filter floor & $\varepsilon$ & 1e-12 \\
Rolling CVaR window & $W$ & 60 trading days \\
CVaR confidence level & $\alpha$ & 95\% \\
Risk-on exposure & $w(RO)$ & 1 \\
Neutral exposure & $w(N)$ & 0.5 \\
Risk-off exposure & $w(RF)$ & -0.5 \\
Crisis exposure & $w(C)$ & -1 \\
\bottomrule
\end{tabular}
\end{table}

%% file: tables/regime_summary.tex
\begin{table}[h!]
\caption{Summary statistics for Bayesian-filtered LLM posterior regime probabilities.}
\label{tab:regime_summary}
\resizebox{\textwidth}{!}{%
\begin{tabular}{lccccc}
\toprule
Latent regime & Mean posterior probability & Posterior standard deviation & Minimum posterior probability & Maximum posterior probability & Dominant-regime frequency \\
\midrule
Risk-on & 0.0952 & 0.0998 & 7.37e-05 & 0.475 & 0 \\
Neutral & 0.748 & 0.271 & 0.0674 & 0.948 & 0.84 \\
Risk-off & 0.089 & 0.162 & 0.000431 & 0.688 & 0.0781 \\
Crisis & 0.0677 & 0.149 & 2.26e-08 & 0.537 & 0.0818 \\
\bottomrule
\end{tabular}
}
\end{table}

%% file: tables/performance_metrics.tex
\begin{table}
\caption{Out-of-sample performance diagnostics for the belief-state portfolio induced by the LLM--Bayesian filter.}
\label{tab:performance_metrics}
\begin{tabular}{lll}
\toprule
Performance statistic & Mathematical symbol & Estimated value \\
\midrule
Annualized return & $\mu_{\mathrm{ann}}$ & 0.044 \\
Annualized volatility & $\sigma_{\mathrm{ann}}$ & 0.107 \\
Sharpe ratio & $SR$ & 0.411 \\
Maximum drawdown & $MDD$ & -0.0899 \\
Daily value-at-risk, 95\% & $VaR_{0.95}$ & -0.00725 \\
Daily conditional value-at-risk, 95\% & $CVaR_{0.95}$ & -0.0159 \\
\bottomrule
\end{tabular}
\end{table}

%% file: tables/risk_diagnostics.tex
\begin{table}[!h]
\caption{Uncertainty, belief-instability and consequence diagnostics for the proposed local AI risk measure.}
\label{tab:risk_diagnostics}
\begin{tabular}{lll}
\toprule
Risk diagnostic & Mathematical symbol & Estimated value \\
\midrule
Mean normalized posterior entropy & $\bar{H}/\log K$ & 0.363 \\
Maximum normalized posterior entropy & $\max_t H_t/\log K$ & 0.844 \\
Mean belief-state drift & $\overline{D}_{KL}(b_t\Vert b_{t-1})$ & 0.0235 \\
Maximum belief-state drift & $\max_t D_{KL}(b_t\Vert b_{t-1})$ & 0.819 \\
Mean local AI risk score & $\bar{\mathcal{R}}$ & 0.00602 \\
Maximum local AI risk score & $\max_t \mathcal{R}_t$ & 0.0338 \\
Mean rolling CVaR consequence & $\overline{CVaR}_{0.95}$ & 0.0148 \\
Maximum rolling CVaR consequence & $\max_t CVaR_{0.95,t}$ & 0.0327 \\
\bottomrule
\end{tabular}
\end{table}

%% file: llm_bayesian_agentic_ai_risk.bbl
\begin{thebibliography}{99}

\bibitem[Artzner et~al.(1999)]{artzner1999coherent}
Artzner, P., Delbaen, F., Eber, J., and Heath, D. (1999).
Coherent measures of risk.
\emph{Mathematical Finance}, 9(3), 203--228.

\bibitem[Ben-Tal et~al.(2009)]{ben2009robust}
Ben-Tal, A., El Ghaoui, L., and Nemirovski, A. (2009).
\emph{Robust Optimization}.
Princeton University Press.

\bibitem[Bertsekas(2017)]{bertsekas2017dynamic}
Bertsekas, D. (2017).
\emph{Dynamic Programming and Optimal Control}.
Athena Scientific.

\bibitem[Black and Litterman(1992)]{black1992global}
Black, F. and Litterman, R. (1992).
Global portfolio optimization.
\emph{Financial Analysts Journal}, 48(5), 28--43.

\bibitem[Cover and Thomas(2006)]{cover2006elements}
Cover, T. and Thomas, J. (2006).
\emph{Elements of Information Theory}.
Wiley.

\bibitem[Diebold et~al.(1998)]{diebold1998evaluating}
Diebold, F., Gunther, T., and Tay, A. (1998).
Evaluating density forecasts with applications to financial risk management.
\emph{International Economic Review}, 39(4), 863--883.

\bibitem{dixon2026pomdp}
Dixon, M. (2026).
\emph{Agentic AI as a Partially Observable Markov Decision Process}.
Working Paper, Quiota Research.


\bibitem[Federal Reserve and OCC(2011)]{sr117}
Board of Governors of the Federal Reserve System and Office of the Comptroller of the Currency (2011).
Supervisory guidance on model risk management, SR 11-7 / OCC 2011-12.

\bibitem[Föllmer and Schied(2016)]{foellmer2016stochastic}
Föllmer, H. and Schied, A. (2016).
\emph{Stochastic Finance: An Introduction in Discrete Time}.
De Gruyter.

\bibitem[Glasserman(2004)]{glasserman2004monte}
Glasserman, P. (2004).
\emph{Monte Carlo Methods in Financial Engineering}.
Springer.

\bibitem[Glasserman and Xu(2014)]{glasserman2014robust}
Glasserman, P. and Xu, X. (2014).
Robust risk measurement and model risk.
\emph{Quantitative Finance}, 14(1), 29--58.

\bibitem[Hansen and Sargent(2008)]{hansen2008robust}
Hansen, L. and Sargent, T. (2008).
\emph{Robustness}.
Princeton University Press.

\bibitem[He and Litterman(2002)]{he2002intuition}
He, G. and Litterman, R. (2002).
The intuition behind Black--Litterman model portfolios.
Goldman Sachs Asset Management working paper.

\bibitem[Jorion(2007)]{jorion2007var}
Jorion, P. (2007).
\emph{Value at Risk: The New Benchmark for Managing Financial Risk}.
McGraw-Hill.

\bibitem[Kaelbling et~al.(1998)]{kaelbling1998planning}
Kaelbling, L., Littman, M., and Cassandra, A. (1998).
Planning and acting in partially observable stochastic domains.
\emph{Artificial Intelligence}, 101(1--2), 99--134.

\bibitem[McNeil et~al.(2015)]{mcneil2015quantitative}
McNeil, A., Frey, R., and Embrechts, P. (2015).
\emph{Quantitative Risk Management: Concepts, Techniques and Tools}.
Princeton University Press.

\bibitem[NIST(2023)]{nist2023airmf}
National Institute of Standards and Technology (2023).
\emph{Artificial Intelligence Risk Management Framework (AI RMF 1.0)}.
NIST AI 100-1.

\bibitem[NIST(2024)]{nist2024genai}
National Institute of Standards and Technology (2024).
\emph{Artificial Intelligence Risk Management Framework: Generative Artificial Intelligence Profile}.
NIST AI 600-1.

\bibitem[OpenAI(2026)]{openai2026structured}
OpenAI (2026).
Structured outputs API documentation.

\bibitem[Puterman(1994)]{puterman}
Puterman, M. (1994).
\emph{Markov Decision Processes: Discrete Stochastic Dynamic Programming}.
Wiley.

\bibitem[Rockafellar and Uryasev(2000)]{rockafellar2000optimization}
Rockafellar, R. and Uryasev, S. (2000).
Optimization of Conditional Value-at-Risk.
\emph{Journal of Risk}, 2, 21--41.

\bibitem[Satchell and Scowcroft(2000)]{satchell2000demystification}
Satchell, S. and Scowcroft, A. (2000).
A demystification of the Black--Litterman model.
\emph{Journal of Asset Management}, 1, 138--150.

\end{thebibliography}
